\newcounter{savenumi}
\newtheorem{theoremfoo}{Theorem}
\newenvironment{theorem}{\pagebreak[1]\begin{theoremfoo}}{\end{theoremfoo}}
\newtheorem{propositionfoo}[theoremfoo]{Proposition}
\newtheorem{lemmafoo}[theoremfoo]{Lemma}
\newenvironment{lemma}{\pagebreak[1]\begin{lemmafoo}}{\end{lemmafoo}}
\newtheorem{conjecturefoo}[theoremfoo]{Conjecture}
\newtheorem{corollaryfoo}[theoremfoo]{Corollary}
\newtheorem{exercisefoo}{Exercise}
\newtheorem{openfoo}[theoremfoo]{Question}
\newtheorem{nttn}[theoremfoo]{Notation}
\newtheorem{dfntn}[theoremfoo]{Definition}
\newenvironment{definition}{\pagebreak[1]\begin{dfntn}\rm}{\end{dfntn}}
\newenvironment{proof}
    {\pagebreak[1]{\narrower\noindent {\bf Proof:\quad\nopagebreak}}}{\QED}
\newcommand{\ceiling}[1]{\left\lceil#1\right\rceil}
\def\nre.{$n$\/-r.e.}
\newtheorem{factfoo}[theoremfoo]{Fact}
\newcommand{\squeeze}{
\textwidth 6in
\textheight 8.8in
\oddsidemargin 0.2in
\topmargin -0.4in
}
\newtheorem{propertyfoo}[theoremfoo]{Property}
\def\@makechapterhead#1{ \vspace*{50pt} { \parindent 0pt \raggedright 
 \ifnum \c@secnumdepth >\m@ne \huge\bf \@chapapp{} \thechapter. \par 
 \vskip 20pt \fi \Huge \bf #1\par 
 \nobreak \vskip 40pt } }
\def\@sect#1#2#3#4#5#6[#7]#8{\ifnum #2>\c@secnumdepth
     \def\@svsec{}\else 
     \refstepcounter{#1}\edef\@svsec{\csname the#1\endcsname.\hskip 1em }\fi
     \@tempskipa #5\relax
      \ifdim \@tempskipa>\z@ 
        \begingroup #6\relax
          \@hangfrom{\hskip #3\relax\@svsec}{\interlinepenalty \@M #8\par}
        \endgroup
       \csname #1mark\endcsname{#7}\addcontentsline
         {toc}{#1}{\ifnum #2>\c@secnumdepth \else
                      \protect\numberline{\csname the#1\endcsname}\fi
                    #7}\else
        \def\@svsechd{#6\hskip #3\@svsec #8\csname #1mark\endcsname
                      {#7}\addcontentsline
                           {toc}{#1}{\ifnum #2>\c@secnumdepth \else
                             \protect\numberline{\csname the#1\endcsname}\fi
                       #7}}\fi
     \@xsect{#5}}
\def\@begintheorem#1#2{\it \trivlist \item[\hskip \labelsep{\bf #1\ #2.}]}
\def\@opargbegintheorem#1#2#3{\it \trivlist
      \item[\hskip \labelsep{\bf #1\ #2\ (#3).}]}
\newif\ifshortconferences
\newif\ifmediumconferences
\def\ending#1{{\count1=#1\relax
\count2=\count1
\divide\count2 by 100
\multiply\count2 by 100
\advance\count1 by -\count2
\ifnum\count1=11
th%
\else \ifnum\count1=12
th%
\else \ifnum\count1=13
th%
\else 
\count2=\count1
\divide\count1 by 10
\multiply\count1 by 10
\advance\count2 by -\count1
\ifnum\count2=1
st%
\else \ifnum\count2=2
nd%
\else \ifnum\count2=3
rd%
\else th%
\fi\fi\fi\fi\fi\fi
}}
\def\Proceedingsofthe{\ifshortconferences Proc.\else\ifmediumconferences Proc.\else Proceedings of the\fi\fi}
\newcounter{confnum}
\def\conf#1#2{%
\setcounter{confnum}{#2}%
\addtocounter{confnum}{-\csname #1zero\endcsname}%
\ifnum\value{confnum}=1%
\expandafter\ifx\csname #1One\endcsname\relax%
\Proceedingsofthe\ \arabic{confnum}\ending{\value{confnum}}\ \csname #1name\endcsname%
\else \csname #1One\endcsname\fi%
\else%
\Proceedingsofthe\
\arabic{confnum}\ending{\value{confnum}}\ \csname #1name\endcsname\fi}
\def\qsym{\vrule width0.7ex height0.9em depth0ex}
\newif\ifqed\qedtrue
\def\noqed{\global\qedfalse}
\def\qed{\ifqed{\penalty1000\unskip\nobreak\hfil\penalty50
\hskip2em\hbox{}\nobreak\hfil\qsym
\parfillskip=0pt \finalhyphendemerits=0\par\medskip}\fi\global\qedtrue}
\def\eqnqed{\noqed
	\def\@tempa{equation}
	\ifx\@tempa\@currenvir\def\@eqnnum{\qsym}%
	\addtocounter{equation}{-1}\else%
    \def\@@eqncr{\let\@tempa\relax
    \ifcase\@eqcnt \def\@tempa{& & &}\or \def\@tempa{& &}%
      \else \def\@tempa{&}\fi
     \@tempa {\def\@eqnnum{{\qsym}}\@eqnnum}
     \global\@eqnswtrue\global\@eqcnt\z@\cr}\fi}
\def\eqnlabel#1#2{\if@filesw {\let\thepage\relax%
   \def\protect{\noexpand\noexpand\noexpand}%
   \edef\@tempa{\write\@auxout{\string
      \newlabel{#2}{{{#1}}{\thepage}}}}%
   \expandafter}\@tempa%
   \if@nobreak \ifvmode\nobreak\fi\fi\fi%
	\def\@tempa{equation}
	\ifx\@tempa\@currenvir\def\theequation{{#1}}%
	\addtocounter{equation}{-1}\else%
    \def\@@eqncr{\let\@tempa\relax
    \ifcase\@eqcnt \def\@tempa{& & &}\or \def\@tempa{& &}%
      \else \def\@tempa{&}\fi
     \@tempa {\def\@eqnnum{{#1}}\@eqnnum}
     \global\@eqnswtrue\global\@eqcnt\z@\cr}\fi}
\def\QED{\qed}
\begin{document}


\title{A Model for Donation Verification}

\author{Bin Fu$^1$, Fengjuan Zhu$^2$, and John Abraham$^1$
 \\ \\
$^1$Department of Computer Science\\
 University of Texas Rio Grande Valley\\
bin.fu@utrgv.edu, john.abraham@utrgv.edu \\
\\
$^2$Department of Law\\
Shaoxing University, Shaoxin, P. C. China\\
zhufj@usx.edu.cn
} \maketitle

\begin{abstract}
In this paper, we introduce a model for donation verification. A
randomized algorithm is developed to check if the money claimed
being received by the collector is $(1-\epsilon)$-approximation to
the total amount money contributed by the donors. We also derive
some negative results that show it is impossible to verify the
donations under some circumstances.
\end{abstract}

\vskip 20pt Keywords: Donation, Verification Model, Approximation
Algorithm, Randomization

\section{Introduction}




Worldwide billions of dollars are donated for charities. For
example, United States alone gave over 335 billion dollars for
philanthropy in 2013. When this much money is involved there would
also be fraudsters who take advantage of one's generosity.
Recognizing fraudulent practices, US Federal Trade Commission has
given a number of things to check before giving to charity. The
efficiency of a charitable organization is currently determined by
the percentage of fund actually end up being used for intended
purpose. CharityWatch [http://www.charitywatch.org/criteria.html]
concludes 60\% or greater spent on charitable programs and the
remaining spent on overhead is acceptable. However, currently no
algorithms are available to detects errors in reporting of monies
donated. Donors merely trust the data provided by the charitable
organizations or charts published by organizations such as Charity
Navigator [http://www.charitynavigator.org]. Some research regarding
charity donations and their management have been conducted in the
academic
community~\cite{Wojciechowski09,CooterBroughman05,RanganathanHenley07,OlsenPracejusBrown03,Bennett09}.
We have not seen any existing research about how donors check the
amount of money received by the collector. It is essential to
develop some algorithm that the donors and charitable organizations
use to trust each other.

With the development of charity donations in the modern society, it
becomes a more and more important social problem about charity
donation system. In addition to establishing related laws, it is
also essential to build up efficient auditing systems about charity
donations, and apply big data technology to manage them. The
progression in this direction will bring  efficient and accurate
methods for charity donations, which will improve our social
reliability.

In this paper we develop a method that would allow us to verify
monies received by charitable organizations.  It would be difficult
for every donor to verify each philanthropic organization.  Our
method is based on a randomized model thereby reducing the number of
verifications.   Using our algorithm, even if only a small
percentage of the donors participate in the verification process,
incorrect data given by the philanthropic organizations (cheating)
can be detected.  With just few steps a donor can verify if the
money is used for intended purpose with a high degree of
probability.

\vskip 10pt

\begin{figure}
{ \vskip 100pt

  {\begin{picture}(10.0,10.0)



    \put(185.0, 100.0){\begin{picture}(0.0,0.0)
                           \line(1,0){60.0}
                         \end{picture}
                        }

    \put(185.0, 130.0){\begin{picture}(0.0,0.0)
                           \line(1,0){60.0}
                         \end{picture}
                        }

    \put(185.0, 100.0){\begin{picture}(0.0,0.0)
                           \line(0,1){30.0}
                         \end{picture}
                        }

    \put(245.0, 130.0){\begin{picture}(0.0,0.0)
                           \line(0,-1){30.0}
                         \end{picture}
                        }

    \put(210.0, 120.0){a}
    \put(195.0, 105.0){\small D(a)=100}


    \put(95.0, 50.0){\begin{picture}(0.0,0.0)
                           \line(1,0){60.0}
                         \end{picture}
                        }

    \put(95.0, 80.0){\begin{picture}(0.0,0.0)
                           \line(1,0){60.0}
                         \end{picture}
                        }

    \put(95.0, 50.0){\begin{picture}(0.0,0.0)
                           \line(0,1){30.0}
                         \end{picture}
                        }

    \put(155.0, 80.0){\begin{picture}(0.0,0.0)
                           \line(0,-1){30.0}
                         \end{picture}
                        }

    \put(120.0, 70.0){b}
    \put(105.0, 55.0){\small D(b)=6}

     \put(150.0, 80.0){\begin{picture}(0.0,0.0)
                          \vector(2,1){40.0}
                        \end{picture}
                       }


    \put(275.0, 50.0){\begin{picture}(0.0,0.0)
                           \line(1,0){60.0}
                         \end{picture}
                        }

    \put(275.0, 80.0){\begin{picture}(0.0,0.0)
                           \line(1,0){60.0}
                         \end{picture}
                        }

    \put(275.0, 50.0){\begin{picture}(0.0,0.0)
                           \line(0,1){30.0}
                         \end{picture}
                        }

    \put(335.0, 80.0){\begin{picture}(0.0,0.0)
                           \line(0,-1){30.0}
                         \end{picture}
                        }

    \put(300.0, 70.0){c}
    \put(285.0, 55.0){\small D(c)=94}

    \put(278.0, 80.0){\begin{picture}(0.0,0.0)
                          \vector(-2,1){40.0}
                        \end{picture}
                       }


    \put(35.0, 00.0){\begin{picture}(0.0,0.0)
                           \line(1,0){60.0}
                         \end{picture}
                        }

    \put(35.0, 30.0){\begin{picture}(0.0,0.0)
                           \line(1,0){60.0}
                         \end{picture}
                        }

    \put(35.0, 0.0){\begin{picture}(0.0,0.0)
                           \line(0,1){30.0}
                         \end{picture}
                        }

    \put(95.0, 30.0){\begin{picture}(0.0,0.0)
                           \line(0,-1){30.0}
                         \end{picture}
                        }

    \put(60.0, 20.0){d}
    \put(45.0, 5.0){\small D(d)=1}

     \put(70.0, 30.0){\begin{picture}(0.0,0.0)
                          \vector(2,1){40.0}
                        \end{picture}
                       }


    \put(135.0, 00.0){\begin{picture}(0.0,0.0)
                           \line(1,0){60.0}
                         \end{picture}
                        }

    \put(135.0, 30.0){\begin{picture}(0.0,0.0)
                           \line(1,0){60.0}
                         \end{picture}
                        }

    \put(135.0, 00.0){\begin{picture}(0.0,0.0)
                           \line(0,1){30.0}
                         \end{picture}
                        }

    \put(195.0, 30.0){\begin{picture}(0.0,0.0)
                           \line(0,-1){30.0}
                         \end{picture}
                        }

    \put(160.0, 20.0){e}
    \put(145.0, 5.0){\small D(e)=5}

     \put(175.0, 30.0){\begin{picture}(0.0,0.0)
                          \vector(-2,1){40.0}
                        \end{picture}
                       }


    \put(235.0, 00.0){\begin{picture}(0.0,0.0)
                           \line(1,0){60.0}
                         \end{picture}
                        }

    \put(235.0, 30.0){\begin{picture}(0.0,0.0)
                           \line(1,0){60.0}
                         \end{picture}
                        }

    \put(235.0, 0.0){\begin{picture}(0.0,0.0)
                           \line(0,1){30.0}
                         \end{picture}
                        }

    \put(295.0, 30.0){\begin{picture}(0.0,0.0)
                           \line(0,-1){30.0}
                         \end{picture}
                        }

    \put(260.0, 20.0){f}
    \put(245.0, 5.0){\small D(f)=10}

     \put(255.0, 30.0){\begin{picture}(0.0,0.0)
                          \vector(2,1){40.0}
                        \end{picture}
                       }


    \put(335.0, 00.0){\begin{picture}(0.0,0.0)
                           \line(1,0){60.0}
                         \end{picture}
                        }

    \put(335.0, 30.0){\begin{picture}(0.0,0.0)
                           \line(1,0){60.0}
                         \end{picture}
                        }

    \put(335.0, 00.0){\begin{picture}(0.0,0.0)
                           \line(0,1){30.0}
                         \end{picture}
                        }

    \put(395.0, 30.0){\begin{picture}(0.0,0.0)
                           \line(0,-1){30.0}
                         \end{picture}
                        }

    \put(360.0, 20.0){g}
    \put(345.0, 5.0){\small D(g)=84}

     \put(355.0, 30.0){\begin{picture}(0.0,0.0)
                          \vector(-2,1){40.0}
                        \end{picture}
                       }

   \end{picture}
  }\caption{Donation Tree with D(.) Values}\label{fig1}
}
\end{figure}
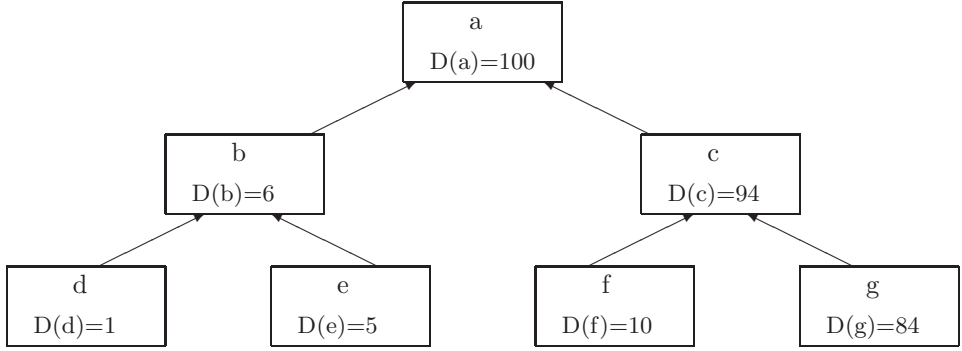

\section{Models}

Assume that there are $n$ people who donated money. Person $i$
donates $s_i$. In this model, we assume that each person checks his
donation with probability $1-e^{-\lambda s}$ if he donated $s$
amount money, where $\lambda$ is fixed. This model means that a
person will have larger chance to check his donation if he
contributes more money. We define a donation tree.


\begin{definition}We define a {\it donation tree}.
 For each leaf $L$ in a donation tree,
its donation value is defined to be $D(L)$. For node $N$ in a tree
$T$, define $D(N)$ to be the sum of values $D(.)$ in its leaves of
the subtree with root at $N$.  For each node $D$, function $V(N)$ is
the amount money that the collector claims from the donors at the
leaves of the subtree with root at $N$. An {\it error path} from a
leaf to the root has a node $N$ with $V(N)<V(N_1)+\cdots+V(N_k)$,
where $N_1,\cdots, N_k$ are the children of $N$ in $T$, and $V(N)$
be the vale saved in node $N$.
\end{definition}

A donation tree without cheating should be the case $D(N)=V(N)$ for
all nodes in the tree. We have the donation tree without cheating at
Figure~\ref{fig1}.


Let $k$ be an integer at least $2$. A $k$-donation tree is a
donation tree such that each internal node has at most $k$ children.
The money donated from one donor is at a leaf. For every node saves
$N$, the total money $D(N)$ of leave below it, and a value $V(N)\le
D(N)$ to represents the amount of money the collector claiming to
have received from the leaves. In the case $V(N)<D(N)$, it is
considered a cheating from the collector.



\section{Algorithm and Its Analysis}

In this section, we develop an algorithm for this problem.


\begin{lemma}\label{lemma-basic}For each integer $k\ge 2$,
A $k$-donation tree can be built in $O(n)$ time offline with depth
$O(\log n)$. It also supports an $O(\log n)$ time for both insertion
and deletion.
\end{lemma}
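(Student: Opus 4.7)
The plan is to treat this as a fairly standard balanced-tree construction, with the one twist being that we must also maintain the aggregated values $D(N)$ (and, when applicable, $V(N)$) at every internal node, so every update must refresh these counters along one root-to-leaf path.

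For the offline $O(n)$ construction, I would proceed bottom-up. Place the $n$ donors as leaves in some fixed order, then repeatedly group $k$ consecutive nodes of the current top layer into a fresh internal parent, computing $D(\cdot)$ of that parent as the sum of its children's values. After one pass the number of nodes drops from $n$ to $\lceil n/k\rceil$, then to $\lceil n/k^2\rceil$, and so on, so the total work is $n+n/k+n/k^2+\cdots = O(n)$ by the geometric series, and the resulting tree has depth $\lceil \log_k n\rceil = O(\log n)$. The last layer may be incomplete; I would argue that filling the rightmost parent with fewer than $k$ children still satisfies the at-most-$k$-children requirement and does not worsen the depth by more than a constant.

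For online $O(\log n)$ insertion and deletion, I would invoke a standard balanced-tree regime. For $k\ge 3$ the natural choice is a $B$-tree-style structure in which every internal node has between $\lceil k/2\rceil$ and $k$ children, so that splits on overflow and borrow/merge on underflow restore the invariant while keeping depth $O(\log_k n)$. For the edge case $k=2$ (where a $B$-tree rule would force degree at least $1$ on underflow and no longer guarantee logarithmic depth) I would fall back to an AVL or red-black scheme, which gives the same guarantees for binary trees. In either regime, insertion walks down to the appropriate leaf position, inserts, and rebalances on the way up; deletion is symmetric. Crucially, the aggregated value $D(N)$ of every node on the affected path is recomputed from its (at most $k$) children in $O(k)=O(1)$ time per node, so the entire update costs $O(\log n)$. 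The same book-keeping applies to $V(N)$.

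The main obstacle I anticipate is not asymptotic analysis but the bookkeeping of $D(\cdot)$ (and $V(\cdot)$) through the rebalancing operations: whenever a node is split, merged, or rotated, the values at the affected nodes must be recomputed consistently from their new children. I would handle this by stating once that every structural modification touches $O(1)$ nodes on the path and that $D$ and $V$ are sums over children, so recomputation is local and cheap; this lets the $O(\log n)$ bound for the standard balanced-tree operations transfer unchanged to the augmented tree. The rest of the proof is then just quoting the classical invariants of $B$-trees (or AVL/red-black trees for $k=2$).
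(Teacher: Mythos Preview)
Your proposal is correct and follows essentially the same approach as the paper, which simply appeals to a divide-and-conquer construction for the offline $O(n)$ build and to a B$^+$-tree structure for the $O(\log n)$ insertion and deletion. In fact your write-up is considerably more detailed than the paper's own two-sentence proof, including the geometric-series accounting, the handling of the $k=2$ edge case, and the maintenance of the aggregated values through rebalancing.
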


\begin{proof}
A divide and conquer method can be used to build a donation tree of
depth $O(\log n)$ with $O(n)$ time offline (the input of donations
from $n$ people are given). If it based on the structure of B+-tree,
then it can support $O(\log n)$ time for both insertion and
deletion.
\end{proof}

\vskip 10pt

{\bf Protocol}

 Collector:

\qquad Generate a file $F$ that a donation tree.

\qquad Publish the file $F$.

\vskip 10pt

donor $m_i$:

\qquad Check if his donation $(D(m_i)=V(m_i))$.

\qquad For each node $N$ on the path from $m_i$ to the tree root
$m_0$, check if $V(N)=V(N_1)+\cdots+V(N_k)$, where $N_1,\cdots,N_k$
are all children of $N$.

\qquad Report error path if at least one of the two checks fails.

{\bf End of Protocol}

\begin{lemma}\label{count-lemma}
Assume that the root $N$ of the donation tree $T$ has a value
$V(N)<D(N)$. There are leaves $m_1,\cdots, m_k$ in the tree that all
have error paths to root, and $D(m_1)+\cdots+D(m_k)\ge D(N)-V(N)$.
\end{lemma}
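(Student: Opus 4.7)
The plan is to strengthen the claim and prove it by induction on the height of a subtree, then specialize to the root. Define the local deficit $\delta(N) = D(N) - V(N) \ge 0$; the assumption is $\delta(\text{root}) > 0$. I would prove the following by induction on the height of the subtree rooted at $N$: whenever $\delta(N) > 0$, there exist leaves beneath $N$ whose $D$-values sum to at least $\delta(N)$ and each of which has, on its path from that leaf up to $N$, either an internal error node (one with $V(M) < \sum V(M_j)$) or a leaf where $V < D$. Applied at $N = \text{root}$, each such path extends trivially to an error path to the root.

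For the inductive step at an internal node $N$ with children $N_1, \ldots, N_k$, I would split into two cases. Case (a): $V(N) < V(N_1) + \cdots + V(N_k)$. Then $N$ itself is an error node, so every leaf beneath $N$ already has the required error node on its path up to $N$; select all leaves beneath $N$ and observe their total is $D(N) \ge \delta(N)$. Case (b): $V(N) \ge \sum V(N_i)$. Combining with $D(N) = \sum D(N_i)$ yields $\sum \delta(N_i) \ge \delta(N)$, so the deficit pushes down into the children. For each child $N_i$ with $\delta(N_i) > 0$, invoke the induction hypothesis, take the union of the leaf sets, and add the totals. Any leaf $m$ returned by the recursion on $N_i$ has an error node on its path up to $N_i$, and since $N_i$ lies on the path from $m$ to $N$, the same node witnesses an error on the path from $m$ to $N$.

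The base case is a single leaf $L$ with $\delta(L) > 0$, i.e., $V(L) < D(L)$. Here the donor's direct check $D(m_i) = V(m_i)$ in the protocol plays the role of the error on the path, so $\{L\}$ suffices with $D(L) \ge \delta(L)$. Applying the strengthened claim at the root and noting that every selected leaf therefore has an error somewhere on its path to the root gives the lemma, with $D(m_1) + \cdots + D(m_k) \ge \delta(\text{root}) = D(N) - V(N)$.

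The main obstacle is the case split at an internal node, specifically verifying that in case (b) the deficit is genuinely conserved when pushed to the children and that the combined set harvested from different children still consists of leaves with error paths to the root. A secondary subtlety is aligning the paper's definition of an error path (which is phrased in terms of internal nodes) with the base case of a leaf whose claimed value is short; I would handle this by appealing to the protocol's explicit leaf check, so that a leaf with $V(L) < D(L)$ serves as the error witness on its own path.
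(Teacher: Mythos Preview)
Your proposal is correct and follows essentially the same approach as the paper: induction on depth with the identical two-case split (either $V(N) < \sum_i V(N_i)$, so every leaf below $N$ is an error witness, or the deficit pushes down to the children and one recurses on those with positive deficit). You are in fact more explicit than the paper in stating the strengthened inductive hypothesis and in handling the leaf base case via the protocol's leaf check, which the paper simply labels ``trivial.''
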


\begin{proof}
We prove it by induction. It is trivial when the depth is $0$.
Assume that the statement is true for the depth at most $d$.
Consider the depth $d+1$. Let $N$ be a node of depth $d+1$ and has
children $N_1,\cdots, N_k$.

Case 1.  $V(N)<V(N_1)+\cdots+V(N_k)$, then every leaf has a error
path.

Case 2. $V(N)\ge V(N_1)+\cdots+V(N_k)$.  Let  $N_{i_1},\cdots,
N_{i_t}$ be all of the nodes of $N_1,\cdots, N_k$ such that
$V(N_{i_s})<D(N_{i_s})$. We have
\begin{eqnarray}
\sum_{j=1}^t (D(N_{i_j})-V(N_{i_j}))&\ge& \sum_{a=1}^k
(D(N_{a})-V(N_{a}))\\
&\ge& D(N)-V(N).
\end{eqnarray}
  We note that for each $a\in \{1,2,\cdots,
k\}-\{i_1,\cdots, i_t\}$, $D(N_{a})-V(N_{a})\le 0$.

By induction hypothesis, for each $i_j$, there are leave
$l_{i_j,1},\cdots, l_{i_j, u}$  under the subtree with root at
$N_{i_j}$ such that $D(l_{i_j,1})+\cdots+D(l_{i_j, u})\ge
D(N_{i_j})-V(N_{i_j})$. Let $H_j=\{l_{i_j,1},\cdots, l_{i_j, u}\}$
for $j=1,2,\cdots, t$. Let $H$ be the set of all leave $m_i\in
H_1\cup\cdots H_t$, we have $\sum_{m_i\in H}D(m_i)\ge D(N)-V(N)$.

\end{proof}

\vskip 30pt

\begin{figure}
{ \vskip 100pt

  {\begin{picture}(10.0,10.0)



    \put(185.0, 100.0){\begin{picture}(0.0,0.0)
                           \line(1,0){60.0}
                         \end{picture}
                        }

    \put(185.0, 130.0){\begin{picture}(0.0,0.0)
                           \line(1,0){60.0}
                         \end{picture}
                        }

    \put(185.0, 100.0){\begin{picture}(0.0,0.0)
                           \line(0,1){30.0}
                         \end{picture}
                        }

    \put(245.0, 130.0){\begin{picture}(0.0,0.0)
                           \line(0,-1){30.0}
                         \end{picture}
                        }

    \put(210.0, 120.0){a}
    \put(195.0, 105.0){\small V(a)=100}


    \put(95.0, 50.0){\begin{picture}(0.0,0.0)
                           \line(1,0){60.0}
                         \end{picture}
                        }

    \put(95.0, 80.0){\begin{picture}(0.0,0.0)
                           \line(1,0){60.0}
                         \end{picture}
                        }

    \put(95.0, 50.0){\begin{picture}(0.0,0.0)
                           \line(0,1){30.0}
                         \end{picture}
                        }

    \put(155.0, 80.0){\begin{picture}(0.0,0.0)
                           \line(0,-1){30.0}
                         \end{picture}
                        }

    \put(120.0, 70.0){b}
    \put(105.0, 55.0){\small V(b)=6}

     \put(150.0, 80.0){\begin{picture}(0.0,0.0)
                          \vector(2,1){40.0}
                        \end{picture}
                       }


    \put(275.0, 50.0){\begin{picture}(0.0,0.0)
                           \line(1,0){60.0}
                         \end{picture}
                        }

    \put(275.0, 80.0){\begin{picture}(0.0,0.0)
                           \line(1,0){60.0}
                         \end{picture}
                        }

    \put(275.0, 50.0){\begin{picture}(0.0,0.0)
                           \line(0,1){30.0}
                         \end{picture}
                        }

    \put(335.0, 80.0){\begin{picture}(0.0,0.0)
                           \line(0,-1){30.0}
                         \end{picture}
                        }

    \put(300.0, 70.0){c}
    \put(285.0, 55.0){\small V(c)=94}

    \put(278.0, 80.0){\begin{picture}(0.0,0.0)
                          \vector(-2,1){40.0}
                        \end{picture}
                       }


    \put(35.0, 00.0){\begin{picture}(0.0,0.0)
                           \line(1,0){60.0}
                         \end{picture}
                        }

    \put(35.0, 30.0){\begin{picture}(0.0,0.0)
                           \line(1,0){60.0}
                         \end{picture}
                        }

    \put(35.0, 0.0){\begin{picture}(0.0,0.0)
                           \line(0,1){30.0}
                         \end{picture}
                        }

    \put(95.0, 30.0){\begin{picture}(0.0,0.0)
                           \line(0,-1){30.0}
                         \end{picture}
                        }

    \put(60.0, 20.0){d}
    \put(45.0, 5.0){\small V(d)=1}

     \put(70.0, 30.0){\begin{picture}(0.0,0.0)
                          \vector(2,1){40.0}
                        \end{picture}
                       }


    \put(135.0, 00.0){\begin{picture}(0.0,0.0)
                           \line(1,0){60.0}
                         \end{picture}
                        }

    \put(135.0, 30.0){\begin{picture}(0.0,0.0)
                           \line(1,0){60.0}
                         \end{picture}
                        }

    \put(135.0, 00.0){\begin{picture}(0.0,0.0)
                           \line(0,1){30.0}
                         \end{picture}
                        }

    \put(195.0, 30.0){\begin{picture}(0.0,0.0)
                           \line(0,-1){30.0}
                         \end{picture}
                        }

    \put(160.0, 20.0){e}
    \put(145.0, 5.0){\small V(e)=5}

     \put(175.0, 30.0){\begin{picture}(0.0,0.0)
                          \vector(-2,1){40.0}
                        \end{picture}
                       }


    \put(235.0, 00.0){\begin{picture}(0.0,0.0)
                           \line(1,0){60.0}
                         \end{picture}
                        }

    \put(235.0, 30.0){\begin{picture}(0.0,0.0)
                           \line(1,0){60.0}
                         \end{picture}
                        }

    \put(235.0, 0.0){\begin{picture}(0.0,0.0)
                           \line(0,1){30.0}
                         \end{picture}
                        }

    \put(295.0, 30.0){\begin{picture}(0.0,0.0)
                           \line(0,-1){30.0}
                         \end{picture}
                        }

    \put(260.0, 20.0){f}
    \put(245.0, 5.0){\small V(f)=10}

     \put(255.0, 30.0){\begin{picture}(0.0,0.0)
                          \vector(2,1){40.0}
                        \end{picture}
                       }


    \put(335.0, 00.0){\begin{picture}(0.0,0.0)
                           \line(1,0){60.0}
                         \end{picture}
                        }

    \put(335.0, 30.0){\begin{picture}(0.0,0.0)
                           \line(1,0){60.0}
                         \end{picture}
                        }

    \put(335.0, 00.0){\begin{picture}(0.0,0.0)
                           \line(0,1){30.0}
                         \end{picture}
                        }

    \put(395.0, 30.0){\begin{picture}(0.0,0.0)
                           \line(0,-1){30.0}
                         \end{picture}
                        }

    \put(360.0, 20.0){g}
    \put(345.0, 5.0){\small V(g)=84}

     \put(355.0, 30.0){\begin{picture}(0.0,0.0)
                          \vector(-2,1){40.0}
                        \end{picture}
                       }

   \end{picture}
  }\caption{Donation Tree without Cheating}\label{fig2}
}
\end{figure}
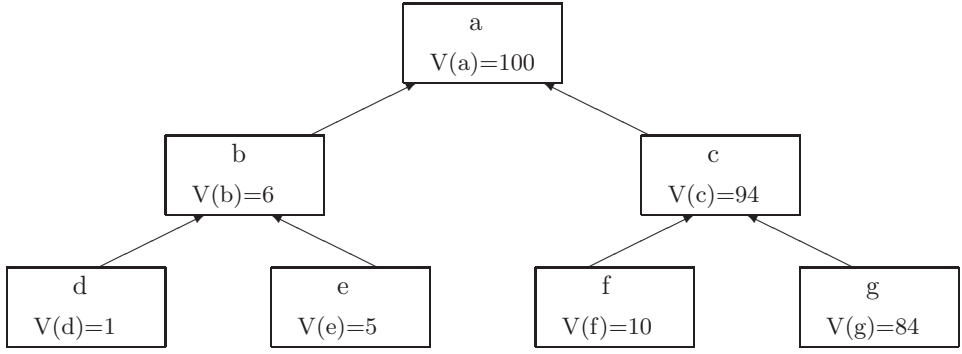

\subsection{Random Verification with Exponential Distribution}

In this section, we consider the case that donor join the
verification by following exponential distribution. The people who
donate more money have higher probability to do the verification
than the people who donate less money.

\begin{theorem}
Assume integer $k\ge 2$ and there are $n$ donors. Each verify takes
$O(kh$ steps, and reports error if it is not an
$(1-\epsilon)$-approximation in the report with probability at least
$1-\delta$, where $\delta =O(e^{-\lambda \epsilon M})$ and $h$ is
the depth of the tree.
\end{theorem}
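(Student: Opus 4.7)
The plan is to bound the two quantities separately: the per-verifier running time, and the failure probability $\delta$.

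For the running time, each donor $m_i$ performs one leaf check ($D(m_i)=V(m_i)$) and then walks up the path to the root. The path has length at most $h$, and at each internal node the donor compares $V(N)$ with the sum $V(N_1)+\cdots+V(N_k)$ of at most $k$ children's claimed values. Thus each verifier performs $O(kh)$ arithmetic operations, giving the first part of the statement.

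For the failure probability, I would assume that the published report fails to be an $(1-\epsilon)$-approximation, meaning $V(m_0) < (1-\epsilon)D(m_0)$ at the root $m_0$, so $D(m_0)-V(m_0) \ge \epsilon M$ where $M = D(m_0) = \sum_i s_i$ is the true total. Apply Lemma~\ref{count-lemma} to the root: there exist leaves $m_{i_1},\ldots,m_{i_t}$ whose paths to the root are all error paths and whose donations satisfy
\[
  D(m_{i_1})+\cdots+D(m_{i_t}) \ge D(m_0)-V(m_0) \ge \epsilon M.
\]
A detection occurs whenever at least one of these leaves participates in the verification protocol, since by definition an error path exposes a node $N$ with $V(N)<V(N_1)+\cdots+V(N_k)$, which the protocol flags.

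The detection events at the leaves are independent. Leaf $m_{i_j}$ verifies with probability $1-e^{-\lambda s_{i_j}}$, where $s_{i_j}=D(m_{i_j})$, so the probability that none of these leaves verifies is
\[
  \prod_{j=1}^{t} e^{-\lambda s_{i_j}}
   = \exp\!\Bigl(-\lambda \sum_{j=1}^{t} s_{i_j}\Bigr)
   \le e^{-\lambda \epsilon M}.
\]
This yields $\delta = O(e^{-\lambda\epsilon M})$ as claimed.

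The main obstacle I anticipate is the transition between the deterministic extraction of ``witness'' leaves from Lemma~\ref{count-lemma} and the probabilistic guarantee: the lemma only says the witness set exists, but its composition depends on the collector's (possibly adversarial) choice of $V$-values. The key observation that makes the argument go through is that no matter which witness leaves the lemma produces, the independence of the verification coin flips together with the multiplicative form $1-e^{-\lambda s}$ converts a sum lower bound on donation amounts into an exponential upper bound on the joint non-detection probability, which is exactly what lets the $\epsilon M$ slack translate cleanly into the exponent.
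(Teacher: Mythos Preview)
Your proof is correct and follows essentially the same approach as the paper: invoke Lemma~\ref{count-lemma} to extract a witness set of leaves whose donations sum to at least $\epsilon M$, then bound the probability that none of them verifies by the product $\prod_j e^{-\lambda s_{i_j}} \le e^{-\lambda\epsilon M}$. You supply slightly more detail than the paper does (the explicit $O(kh)$ running-time argument and the remark on independence), but the structure is identical.
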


\begin{proof}
Let $M=D(R)$ where $R$ is the root of the donation tree. Assume that
there is at least $\epsilon M$ error ($D(R)-V(R)\ge \epsilon D(R)$).

Let $m_1,m_2,\cdots, m_t$ be nodes  with error paths to root of the
tree, and have $\sum_{i=1}^tm_i\ge \epsilon M$ by
Lemma~\ref{count-lemma}.

If one of $m_1,m_2,\cdots, m_t$  checks its path to the root, then
an error (or cheating ) can be detected. Therefore, this problem
becomes to compute the probability that none of $m_1,m_2,\cdots,
m_t$ does his verification.

The probability that none of them checks is at most $e^{-\lambda
m_1}\cdot e^{-\lambda m_2}\cdots e^{-\lambda m_t}\le e^{-\lambda
\epsilon M}$.

\end{proof}

\subsection{An Implementation with B-Tree}

A donation tree can be implemented with a B-tree that supports
$O(\log n)$ time for searching, insertion, and deletion. When an new
leave is inserted, we can update all $V(N)$ for node $N$ affected in
$O(\log n)$ time. Similarly, When an new leave is deleted, we can
update all $V(N)$ for node $N$ affected in $O(\log n)$ time.

\subsection{Uniform Random Verification}

In this section, we consider the case that donor join the
verification by following uniform distribution.

\begin{theorem}\label{uniform-theorem}
Assume that each person donates the money in the range $[1,a]$,
 each donor participates in the
verification with probability at least $\delta$. Then it takes
$O(\log_k n)$ steps, and reports error with probability at least
$1-(1-\delta)^{\ceiling{eM\over a}}$.
\end{theorem}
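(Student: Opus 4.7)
The plan is to mimic the argument used in the exponential-distribution case, but replace the bound on the per-leaf non-participation probability by a counting argument that exploits the uniform bound $a$ on each donation.

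First I would apply Lemma~\ref{count-lemma} to the root $R$: assuming the report fails to be an $(1-\epsilon)$-approximation, we have $D(R)-V(R)\ge \epsilon M$ where $M=D(R)$, so there exist leaves $m_1,\ldots,m_t$ with error paths to the root satisfying $\sum_{i=1}^{t} D(m_i)\ge \epsilon M$. The key new observation for the uniform setting is that each donor contributes at most $a$, so $\sum_i D(m_i)\le ta$, which forces $t\ge \lceil \epsilon M/a\rceil$. This is the step that replaces the exponential-weight argument: instead of getting a product $e^{-\lambda \sum D(m_i)}$, we get a fixed lower bound on the cardinality of the offending set.

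Next, since each donor independently participates in the verification protocol with probability at least $\delta$, and the protocol of Section~3 reports an error whenever any leaf on an error path participates, the event ``no error detected'' is contained in the event ``none of $m_1,\ldots,m_t$ participates,'' which has probability at most $(1-\delta)^{t}\le (1-\delta)^{\lceil \epsilon M/a\rceil}$. Hence the probability of detection is at least $1-(1-\delta)^{\lceil \epsilon M/a\rceil}$, matching the claimed bound.

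Finally, for the running-time claim, each verification consists of walking from a leaf up to the root and checking the $V$-sum condition at each internal node. By Lemma~\ref{lemma-basic}, a $k$-donation tree can be built with depth $O(\log_k n)$, so a single leaf-to-root traversal takes $O(\log_k n)$ steps (the constant hides the per-node cost of summing up to $k$ children, which is $O(k)$, as in the previous theorem). The main obstacle is really just the counting step that converts a lower bound on a sum of donations into a lower bound on the number of offending leaves; everything else is a direct analogue of the previous proof.
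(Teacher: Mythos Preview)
Your proposal is correct and follows essentially the same route as the paper: the paper's proof simply asserts that a failure of $(1-\epsilon)$-approximation yields at least $\lceil \epsilon M/a\rceil$ donors with error paths (exactly your counting step via Lemma~\ref{count-lemma} and the bound $D(m_i)\le a$), and then bounds the probability that none of them verifies by $(1-\delta)^{\lceil \epsilon M/a\rceil}$. Your write-up is in fact more explicit than the paper's, but the argument is the same.
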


\begin{proof}
Assume that $M$ is the total amount of money donated by all the
people. If it is not an $(1-\epsilon)$-approximation, then there are
at least $\epsilon M\over a$ error paths corresponding to at least
$k=\ceiling{\epsilon M\over a}$ donors. With probability at most
$(1-\delta)^k=(1-\delta)^{\ceiling{\epsilon M\over a}}$, none of
them will attend the verification. Therefore, with probability at
least $1-(1-\delta)^{\ceiling{eM\over a}}$, the error of the report
will be detected.
\end{proof}

\subsection{Multiple Verification Regions}

In this section, we show the verification in several region. If each
person donates amount in the range $[a_0, a]$. The interval is
partitioned into $[a_0, a_1), [a_1, a_2),\cdots, [a_{k-1}, a_k]$. We
assume people different region have different probability to
participate the verification.

\begin{theorem}
Assume that $[a_0, a_1), [a_1, a_2),\cdots, [a_{k-1}, a_k]$ form a
partition for $[a_0, a]$ with $a_i+1\le a_i(1+\delta)$ for
$i=0,1,2,\cdots, k-1$. Let $I_j=[a_{j}, a_{j+1})$ if $j<k$, and
$I_k=[a_{k-1}, a_k]$. Let $p_j$ be the probability that a person
with donation range in $I_j$ verifies. Then there is a verification
protocol such that with probability at most
$\sum_{j=0}^{k-1}(1-p_j)^{\epsilon M_j/(1+\delta)}$ to fail to check
$1-\epsilon$ approximation, where $M_j$ is the total amount of
donation with each donation in $I_j$. Furthermore, the verification
time is $O(\log n+k)$.
\end{theorem}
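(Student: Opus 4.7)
The plan is to reduce to the uniform case (Theorem~\ref{uniform-theorem}) one region at a time. The protocol maintains a donation tree per region $I_j$ (or equivalently groups the donors of the global tree by region), and each donor in $I_j$ independently performs the path-verification of Section~3 with probability $p_j$. The key structural property is that inside region $I_j$ every donation lies in $[a_j, a_{j+1})$ with $a_{j+1}\le (1+\delta)a_j$, so the maximum and minimum donations in that region differ by at most a factor of $1+\delta$; this plays the role that the parameter $a$ played in the uniform analysis.

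First I would apply Lemma~\ref{count-lemma} inside each region: if the reported value for $I_j$ underreports by at least $\epsilon M_j$, then there exist error-path leaves in $I_j$ whose donation mass totals at least $\epsilon M_j$. Since each individual donation in $I_j$ is at most $(1+\delta)a_j$, the number of distinct error leaves must be at least $\epsilon M_j/(1+\delta)$ (after normalizing by $a_j$, exactly as in the proof of Theorem~\ref{uniform-theorem}). Each of these donors independently attends verification with probability $p_j$, so the probability that none of them triggers a detection in $I_j$ is at most $(1-p_j)^{\epsilon M_j/(1+\delta)}$.

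Second, I would combine the regions by a union bound: the overall probability that some region escapes detection is at most $\sum_{j=0}^{k-1}(1-p_j)^{\epsilon M_j/(1+\delta)}$. For the running time, Lemma~\ref{lemma-basic} gives an $O(\log n)$ root-to-leaf verification path, and identifying the correct region (e.g., by a linear scan or lookup over the $k$ interval endpoints $a_0,\ldots,a_k$) costs $O(k)$, giving total verification time $O(\log n + k)$.

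The main obstacle I anticipate is bridging the per-region bounds to a global ``$(1-\epsilon)$-approximation'' guarantee: a global under-report of $\epsilon M$ does not automatically produce an $\epsilon M_j$ under-report inside any single region. To handle this I would have the protocol publish and check a separate sub-sum $V(R_j)$ for each region $I_j$, so that the $(1-\epsilon)$-approximation is enforced region-wise; a pigeonhole argument then shows that any failure of the global $(1-\epsilon)$ claim forces at least one per-region deficit large enough for the Lemma~\ref{count-lemma} counting step above to apply, at which point the union-bound calculation delivers the stated probability.
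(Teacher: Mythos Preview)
Your proposal is correct and follows essentially the same approach as the paper: build a separate verification tree $T_j$ for each region $I_j$, apply the uniform-case Theorem~\ref{uniform-theorem} within each region (using that donations there lie in a ratio-$(1+\delta)$ band), and combine via a union bound. The paper's one extra detail is that it links $T_0,\ldots,T_{k-1}$ as the children of a single root, so the $O(k)$ term in the verification time comes from the sum-check at that root rather than from a region lookup; your concern about the global-to-per-region reduction is legitimate and is exactly what this linking step (publishing and checking a value at the root of each $T_j$) handles, though the paper does not spell this out.
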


\begin{proof}
Use one verification tree $T_j$ for each $I_j$. Form a tree $T$ by
linking $T_1,\cdots, T_{k-1}$ as children. It follows from
Theorem~\ref{uniform-theorem}.
\end{proof}

\vskip 30pt

\begin{figure}
{ \vskip 100pt

  {\begin{picture}(10.0,10.0)



    \put(185.0, 100.0){\begin{picture}(0.0,0.0)
                           \line(1,0){60.0}
                         \end{picture}
                        }

    \put(185.0, 130.0){\begin{picture}(0.0,0.0)
                           \line(1,0){60.0}
                         \end{picture}
                        }

    \put(185.0, 100.0){\begin{picture}(0.0,0.0)
                           \line(0,1){30.0}
                         \end{picture}
                        }

    \put(245.0, 130.0){\begin{picture}(0.0,0.0)
                           \line(0,-1){30.0}
                         \end{picture}
                        }

    \put(210.0, 120.0){a}
    \put(195.0, 105.0){\small V(a)=96}


    \put(95.0, 50.0){\begin{picture}(0.0,0.0)
                           \line(1,0){60.0}
                         \end{picture}
                        }

    \put(95.0, 80.0){\begin{picture}(0.0,0.0)
                           \line(1,0){60.0}
                         \end{picture}
                        }

    \put(95.0, 50.0){\begin{picture}(0.0,0.0)
                           \line(0,1){30.0}
                         \end{picture}
                        }

    \put(155.0, 80.0){\begin{picture}(0.0,0.0)
                           \line(0,-1){30.0}
                         \end{picture}
                        }

    \put(120.0, 70.0){b}
    \put(105.0, 55.0){\small V(b)=6}

     \put(150.0, 80.0){\begin{picture}(0.0,0.0)
                          \vector(2,1){40.0}
                        \end{picture}
                       }


    \put(275.0, 50.0){\begin{picture}(0.0,0.0)
                           \line(1,0){60.0}
                         \end{picture}
                        }

    \put(275.0, 80.0){\begin{picture}(0.0,0.0)
                           \line(1,0){60.0}
                         \end{picture}
                        }

    \put(275.0, 50.0){\begin{picture}(0.0,0.0)
                           \line(0,1){30.0}
                         \end{picture}
                        }

    \put(335.0, 80.0){\begin{picture}(0.0,0.0)
                           \line(0,-1){30.0}
                         \end{picture}
                        }

    \put(300.0, 70.0){c}
    \put(285.0, 55.0){\small V(c)=90}

    \put(278.0, 80.0){\begin{picture}(0.0,0.0)
                          \vector(-2,1){40.0}
                        \end{picture}
                       }


    \put(35.0, 00.0){\begin{picture}(0.0,0.0)
                           \line(1,0){60.0}
                         \end{picture}
                        }

    \put(35.0, 30.0){\begin{picture}(0.0,0.0)
                           \line(1,0){60.0}
                         \end{picture}
                        }

    \put(35.0, 0.0){\begin{picture}(0.0,0.0)
                           \line(0,1){30.0}
                         \end{picture}
                        }

    \put(95.0, 30.0){\begin{picture}(0.0,0.0)
                           \line(0,-1){30.0}
                         \end{picture}
                        }

    \put(60.0, 20.0){d}
    \put(45.0, 5.0){\small V(d)=1}

     \put(70.0, 30.0){\begin{picture}(0.0,0.0)
                          \vector(2,1){40.0}
                        \end{picture}
                       }


    \put(135.0, 00.0){\begin{picture}(0.0,0.0)
                           \line(1,0){60.0}
                         \end{picture}
                        }

    \put(135.0, 30.0){\begin{picture}(0.0,0.0)
                           \line(1,0){60.0}
                         \end{picture}
                        }

    \put(135.0, 00.0){\begin{picture}(0.0,0.0)
                           \line(0,1){30.0}
                         \end{picture}
                        }

    \put(195.0, 30.0){\begin{picture}(0.0,0.0)
                           \line(0,-1){30.0}
                         \end{picture}
                        }

    \put(160.0, 20.0){e}
    \put(145.0, 5.0){\small V(e)=5}

     \put(175.0, 30.0){\begin{picture}(0.0,0.0)
                          \vector(-2,1){40.0}
                        \end{picture}
                       }


    \put(235.0, 00.0){\begin{picture}(0.0,0.0)
                           \line(1,0){60.0}
                         \end{picture}
                        }

    \put(235.0, 30.0){\begin{picture}(0.0,0.0)
                           \line(1,0){60.0}
                         \end{picture}
                        }

    \put(235.0, 0.0){\begin{picture}(0.0,0.0)
                           \line(0,1){30.0}
                         \end{picture}
                        }

    \put(295.0, 30.0){\begin{picture}(0.0,0.0)
                           \line(0,-1){30.0}
                         \end{picture}
                        }

    \put(260.0, 20.0){f}
    \put(245.0, 5.0){\small V(f)=10}

     \put(255.0, 30.0){\begin{picture}(0.0,0.0)
                          \vector(2,1){40.0}
                        \end{picture}
                       }


    \put(335.0, 00.0){\begin{picture}(0.0,0.0)
                           \line(1,0){60.0}
                         \end{picture}
                        }

    \put(335.0, 30.0){\begin{picture}(0.0,0.0)
                           \line(1,0){60.0}
                         \end{picture}
                        }

    \put(335.0, 00.0){\begin{picture}(0.0,0.0)
                           \line(0,1){30.0}
                         \end{picture}
                        }

    \put(395.0, 30.0){\begin{picture}(0.0,0.0)
                           \line(0,-1){30.0}
                         \end{picture}
                        }

    \put(360.0, 20.0){g}
    \put(345.0, 5.0){\small V(g)=84}

     \put(355.0, 30.0){\begin{picture}(0.0,0.0)
                          \vector(-2,1){40.0}
                        \end{picture}
                       }

   \end{picture}
  }\caption{Donation Tree with Cheating. Both nodes f and g can find
the cheating problem at their paths to the root. For example, at
node c, $V(c)<V(f)+V(g)$.}\label{fig3} }
\end{figure}
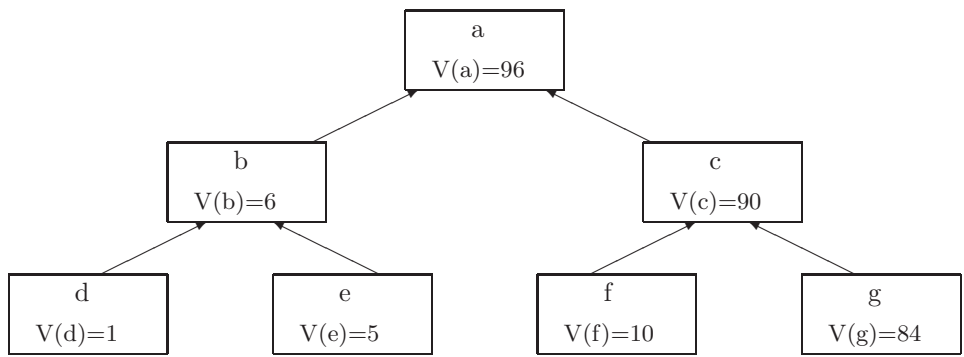

\section{Impossibilities of Verification}

In this section, we show that it is impossible to use uniform
probability to do donation verification. We also prove that it is
impossible to do verification if negative items are allowed.

\begin{theorem}
There is no randomized algorithm fail to detect the cheating from
collector with probability at most $\delta$ if every donor checks
his donation with probability at most $\delta$.
\end{theorem}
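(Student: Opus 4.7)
The plan is to exhibit an adversarial cheating strategy for which any verification protocol that only uses the uniform per-donor checking probability must fail with probability strictly greater than $\delta$ (for the interesting regime $\delta < 1/2$). The strategy is the simplest possible: have the collector cheat on exactly one donor.

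More precisely, I would fix an arbitrary donor $m^*$ with contribution $D(m^*) > 0$, and construct a cheating tree $T'$ that differs from the honest tree $T$ only by decreasing $V(m^*)$ and then decreasing every $V(N)$ on the root-to-$m^*$ path by the same amount, while leaving all other nodes untouched. The key observation is that every internal consistency check $V(N) = V(N_1) + \cdots + V(N_k)$ is still satisfied at every node except along that single path, and along that path the local sum rule \emph{does} hold because both $V(N)$ and $V(\text{child on the path})$ were decreased by the same amount. Therefore the only party whose local check $D(m_i) = V(m_i)$ can possibly fail is $m^*$ himself, and the only path-sum that fails lies entirely on the $m^*$-to-root path, so no other donor can detect the discrepancy regardless of how the protocol is designed.

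Given this, the probability that cheating is detected is bounded above by the probability that donor $m^*$ participates in verification, which by assumption is at most $\delta$. Hence the probability of failing to detect is at least $1-\delta$. Whenever $\delta < 1/2$, this is strictly greater than $\delta$, contradicting the supposed guarantee. I would close by noting that this bound is tight in the sense that no protocol using only the information collected during the verification rounds can do better than what the single witness $m^*$ provides.

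The main obstacle, and the one step worth being careful about, is arguing that cheating on one leaf truly propagates invisibly to everyone except the targeted donor, \emph{even under an arbitrary randomized protocol}. The subtle point is that the adversary must also keep the children-sum equalities valid at every ancestor of $m^*$; this is what forces the decrements along the path and is why the protocol of Lemma~\ref{count-lemma} cannot catch the cheating at any internal node off the witness path. Once this invariant is verified, the probabilistic lower bound on failure is immediate from the union bound applied to the single relevant event.
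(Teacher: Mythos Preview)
Your proof is correct and follows essentially the same idea as the paper: have the collector cheat on a single donor so that only that donor can possibly notice anything wrong, which bounds the detection probability by $\delta$ and hence the failure-to-detect probability below by $1-\delta>\delta$. The paper's version simply drops one donor $A$ (chosen to contribute $M/9$ of the total) from the published report, whereas you keep $m^*$ in the tree with a reduced $V(m^*)$ and propagate the decrement up to the root; your construction is more explicit about why every internal sum-check $V(N)=\sum_j V(N_j)$ remains satisfied, but the underlying argument is identical.
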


\begin{proof}
Let $k=9$. Imagine the collector receives $M$ amount money with
${M\over k}$ from one donor A. He releases a document that includes
all the money from the others except A. If A does not verify it, it
should be all correct without any error. Therefore, with probability
at most $\delta$, the verification fails.
\end{proof}

\begin{theorem}
There is no randomized algorithm fail to detect the cheating with
negative donation allowed from collector with probability at most
$\delta$ if one donor checks his donation with probability at most
$\delta$.
\end{theorem}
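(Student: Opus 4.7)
The plan is to exhibit a single adversarial scenario in which the cheating is invisible to everyone except one designated donor~$A$, thereby forcing the detection probability to be bounded by $A$'s verification probability, which is at most~$\delta$.

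First, I would fix an arbitrary randomized verification algorithm and identify the donor $A$ whose probability of performing a verification is at most $\delta$; by hypothesis such a donor exists. All other donors behave according to whatever distribution the algorithm dictates. The goal is to describe a cheating pattern that is \emph{indistinguishable to every non-$A$ participant} from an honest instance, so that the algorithm's chance of detection is at most the chance that $A$ itself checks.

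Second, I would construct the adversarial instance. Suppose $A$ donates $s>0$. The collector publishes a tree in which $V(A) = s-c$ for some $c>0$ (so that $c$ of $A$'s donation is stolen), while every other real donor $B$ has $V(B) = D(B)$. To make the internal identities $V(N) = V(N_1)+\cdots+V(N_k)$ hold along the path from $A$ to the root (and in fact everywhere in the tree), the collector inserts a \emph{fake} leaf $X$, labelled as a ``donor'' with $V(X) = -c$, as a sibling of $A$ (or anywhere in the tree, adjusting the internal sums accordingly). Because negative donations are permitted, the tree is syntactically legal; and because each node equation balances, every consistency check of the form $V(N)=V(N_1)+\cdots+V(N_k)$ that any real donor performs will succeed.

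Third, I would verify that only $A$ can possibly discover the fraud. A real donor $B\neq A$ only sees: (i) its own leaf, where $V(B) = D(B)$; and (ii) the internal sums on its root path, which balance by construction. The fake leaf $X$ is never checked because no real donor is sitting there. Consequently the only way for the algorithm to output ``error'' is for $A$ to verify, which happens with probability at most~$\delta$; so the failure probability is at least $1-\delta$, contradicting any claim of failing with probability at most~$\delta$ (for $\delta<1/2$, say, or by scaling $c$ and repeating the argument).

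The main obstacle is conceptual rather than technical: one must argue that no real donor other than $A$ has any signal at all that something is wrong. The negative entry is what makes this possible --- without it, hiding a stolen amount $c$ would require reducing some other honest donor's value by $c$, and that donor would detect the discrepancy via the leaf check $D(B)=V(B)$. With negatives, the ``balancing'' mass can be offloaded onto a ghost leaf that corresponds to no real verifier, so the fraud becomes invisible to the entire population save $A$.
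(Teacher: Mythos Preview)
Your construction does hide the fraud from every donor except~$A$, but the role you assign to the negative leaf is mistaken. If the collector simply sets $V(A)=s-c$ and then defines every internal node by $V(N)=V(N_1)+\cdots+V(N_k)$, all of the path checks already balance; no auxiliary leaf is needed. So your stated reason for introducing $X$ (``to make the internal identities hold'') is not correct, and in your argument the hypothesis ``negative donations allowed'' is doing no work. That is a warning sign, because the theorem singles out that hypothesis.

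The paper's proof uses negative donations in a different and essential way. It takes $n-2$ ordinary donors summing to $M$, one \emph{real} donor contributing $-M$, and the lazy donor (donor $n-1$) contributing either $1$ or $0$; the true total $D(R)$ is then $1$ (or $0$). Misreporting donor $n-1$'s single unit is invisible to everyone else yet wipes out the entire reported total---a $100\%$ relative error. The point is that a genuine negative donation can drive $D(R)$ arbitrarily close to zero, so that skimming from one tiny donor is nevertheless catastrophic in the $(1-\epsilon)$-approximation sense used throughout the paper. In your construction the amount stolen is at most $D(A)$, and if the one lazy donor happens to be a small contributor the relative error $c/D(R)$ can be made as small as one likes; you have not ruled out an algorithm that only promises to catch \emph{significant} cheating. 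The missing idea is to use the negative donation to shrink $D(R)$, not to pad the tree with a fictitious leaf.
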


\begin{proof}
Let the sum of $n-2$ donors $m_1,\cdots, m_{n-2}$ be equal to $M$.
Let the donor $n-1$ contributes $1$ or $0$, and donor $n$
contributes $-M$. Consider the first case that donor $n-1$
contributes $1$. The total is equal to $1$.

Consider the second case that donor $n-1$ contributes $0$. The total
is equal to $0$. If that donor $n-1$ takes probability at most
$\delta$ to do verification, then we have probability at most
$\delta$ to make the difference of the two cases.

\end{proof}

\section{Conclusions}

In this paper, we develop a protocol for the donation verification
under some probabilistic assumption. It only expects the donors
follow certain probabilistic distribution to attend verification,
and takes $O(\log n)$ steps for each donor.


\end{document}